\documentclass[11pt]{llncs}
\usepackage{amsmath,amssymb}
\usepackage{graphicx}
\usepackage[linesnumbered, vlined, ruled]{algorithm2e}
\usepackage{fullpage}
\usepackage{hyperref,cite}
\usepackage{lineno}

\usepackage[top=0.8in, bottom=0.9in, left=0.9in, right=0.9in]{geometry}


\def\calU{\mathcal{U}}
\def\calL{\mathcal{L}}


\begin{document}

\title{Dynamic Convex Hulls under Window-Sliding Updates\thanks{A preliminary version of this paper appeared in {\em Proceedings of the 18th Algorithms and Data Structures Symposium (WADS 2023)}. This research was supported in part by NSF under Grants CCF-2005323 and CCF-2300356.}}
\author{
Haitao Wang
}
\institute{
Kahlert School of Computing\\
University of Utah, Salt Lake City, UT 84112, USA\\
\email{haitao.wang@utah.edu}\\
}

\maketitle

\pagestyle{plain}
\pagenumbering{arabic}
\setcounter{page}{1}

\vspace{-0.2in}
\begin{abstract}
We consider the problem of dynamically maintaining the convex hull of
a set $S$ of points in the plane under the following special sequence of insertions and
deletions (called {\em window-sliding updates}): insert a point to the
right of all points of $S$ and delete the leftmost point of $S$. We propose an $O(|S|)$-space data
structure that can handle each update in $O(1)$ amortized time, such that
standard binary-search-based queries on the convex hull of $S$ can be answered in $O(\log h)$
time, where $h$ is the number of vertices of the convex hull of $S$, and the convex hull itself can be output in $O(h)$ time.
\end{abstract}

{\bf Keywords:} Dynamic convex hulls, data structures, insertions, deletions, sliding window


\section{Introduction}
\label{sec:intro}

As a fundamental structure in computational geometry, the convex hull $CH(S)$ of a set $S$ of points in the plane has been well studied in the literature. Several $O(n\log n)$ time algorithms are known for computing $CH(S)$, e.g., see~\cite{ref:deBergCo08,ref:PreparataCo85}, where $n=|S|$, and the time matches the $\Omega(n\log n)$ lower bound. Output-sensitive $O(n\log h)$ time algorithms have also been given~\cite{ref:ChanOp96,ref:KirkpatrickTh86}, where $h$ is the number of vertices of $CH(S)$. If the points of $S$ are already sorted, e.g., by $x$-coordinate, then $CH(S)$ can be computed in $O(n)$ time by Graham's scan~\cite{ref:GrahamAn72}.

Due to a wide range of applications, the problem of dynamically maintaining $CH(S)$, where points can be inserted and/or deleted from $S$, has also been extensively studied.
Overmars and van Leeuwen~\cite{ref:OvermarsMa81} proposed an $O(n)$-space data structure that can support each insertion and deletion in $O(\log^2 n)$ worst-case time; Preparata and Vitter~\cite{ref:PreparataA93} gave a simpler method with the same performance. If only insertions are involved, then the approach of Preparata~\cite{ref:PreparataAn79} can support each insertion in $O(\log n)$ worst-case time.
For deletions only, Hershberger and Suri's method~\cite{ref:HershbergerAp92} can support each update in $O(\log n)$ amortized time. If both insertions and deletions are allowed, a breakthrough was given by Chan~\cite{ref:ChanDy01}, who developed a data structure of linear space that can support each update in $O(\log^{1+\epsilon}n)$ amortized time, for an arbitrarily small $\epsilon>0$. Subsequently, Brodal and Jacob~\cite{ref:BrodalDy00}, and independently Kaplan et al.~\cite{ref:KaplanFa01} reduced the update time to $O(\log n\log\log n)$. Finally, Brodal and Jacob~\cite{ref:BrodalDy02} achieved $O(\log n)$ amortized time performance for each update, with $O(n)$ space.

Under certain special situations, better and simpler results are also known. If the insertions and deletions are given offline, the data structure of Hershberger and Suri~\cite{ref:HershbergerOf96} can support $O(\log n)$ amortized time update. Schwarzkopf~\cite{ref:SchwarzkopfDy91} and Mulmuley~\cite{ref:MulmuleyRa91} presented algorithms to support each update in $O(\log n)$ expected time if the sequence of updates is random in a certain sense. In addition, Friedman, Hershberger, and Snoeyink~\cite{ref:FriedmanEf96} considered the problem of maintaining the convex hull of a simple path $P$ such that vertices are allowed to be inserted and deleted from $P$ at both ends of $P$, and they gave a linear space data structure that can support each update in $O(\log |P|)$ amortized time (more precisely, $O(1)$ amortized time for each deletion and $O(\log |P|)$ amortized time for each insertion).
There are also other special dynamic settings for convex hulls, e.g.,~\cite{ref:ChanTw19,ref:HershbergerCa98}.


In most applications, the reason to maintaining $CH(S)$ is to perform queries on it efficiently.
As discussed in Chan~\cite{ref:ChanTh12}, there are usually two types of queries, depending on whether the query is {\em decomposable}~\cite{ref:BentleyDe79}, i.e., if $S$ is partitioned into two subsets, then the answer to the query for $S$ can be obtained in constant time from the answers of the query for the two subsets. For example, the following queries are decomposable: find the most extreme vertex of $CH(S)$ along a query direction; decide whether a query line intersects $CH(S)$; find the two common tangents to $CH(S)$ from a query point outside $CH(S)$, while the following queries are not decomposable: find the intersection of $CH(S)$ with a vertical query line or more generally an arbitrary query line.
It seems that the decomposable queries are easier to deal with. Indeed, most of the above mentioned data structures can handle the decomposable queries in $O(\log n)$ time each. However, this is not the case for the non-decomposable queries. For example, none of the data structures of~\cite{ref:BrodalDy02,ref:BrodalDy00,ref:ChanDy01,ref:FriedmanEf96,ref:KaplanFa01} can support $O(\log n)$-time non-decomposable queries. More specifically, Chan's data structure~\cite{ref:ChanDy01} can be modified to support each non-decomposable query in $O(\log^{3/2}n)$ time but the amortized update time also increases to $O(\log^{3/2}n)$. Later Chan~\cite{ref:ChanTh12} gave a randomized algorithm that can support each non-decomposable query in expected $O(\log^{1+\epsilon}n)$ time, for an arbitrarily small $\epsilon>0$, and the amortized update time is also $O(\log^{1+\epsilon}n)$.

Another operation on $CH(S)$ is to output it explicitly, ideally in $O(h)$ time. To achieve this, one usually has to maintain $CH(S)$ explicitly in the data structure, e.g., in~\cite{ref:HershbergerAp92,ref:OvermarsMa81}. Unfortunately, most other data structures are not able to do so, e.g.,~\cite{ref:BrodalDy02,ref:BrodalDy00,ref:ChanDy01,ref:FriedmanEf96,ref:HershbergerOf96,ref:KaplanFa01,ref:PreparataA93}, although they can output $CH(S)$ in a slightly slower $O(h\log n)$ time.
In particular, Bus and Buzer~\cite{ref:BusDy15} considered this operation for maintaining the convex hull of a simple path $P$ such that vertices are allowed to be inserted and deleted from $P$ at both ends of $P$, i.e., in the same problem setting as in \cite{ref:FriedmanEf96}. Based on a modification of the algorithm in~\cite{ref:MelkmanOn87}, they achieved $O(1)$ amortized update time such that $CH(S)$ can be output explicitly in $O(h)$ time~\cite{ref:BusDy15}. However, no other queries on $CH(S)$ were considered in~\cite{ref:BusDy15}.

\subsection{Our results}

We consider a special sequence of insertions and deletions: the point inserted by an insertion must be to the right of all points of the current set $S$, and a deletion always happens to the leftmost point of the current set $S$. Equivalently, we may consider the points of $S$ contained in a window bounded by two vertical lines that are moving rightwards (but the window width is not fixed), so we call them the {\em window-sliding updates}.

To solve the problem, one can apply any previous data structure for arbitrary point updates. For example,  the method in~\cite{ref:BrodalDy02} can handle each update in $O(\log n)$ amortized time and answer each decomposable query in $O(\log n)$ time. Alternatively, if we connect all points of $S$ from left to right by line segments, then we can obtain a simple path whose ends are the leftmost and rightmost points of $S$, respectively. Therefore, the data structure of Friedman et al.~\cite{ref:FriedmanEf96} can be applied to handle each update in $O(\log n)$ amortized time and support each decomposable query in $O(\log n)$ time. In addition, although the data structure in~\cite{ref:HershbergerAp92} is particularly for deletions only, Hershberger and Suri~\cite{ref:HershbergerAp92} indicated that their method also works for the window-sliding updates, in which case each update (insertion and deletion) takes $O(\log n)$ amortized time. Further,
the data structure~\cite{ref:HershbergerAp92}
can support both decomposable and non-decomposable queries each in $O(\log n)$ time and report $CH(S)$ in $O(h)$ time.

In this paper, we provide a new data structure for the window-sliding updates. Our data structure uses $O(n)$ space and can handle each update in $O(1)$ amortized time. All decomposable and non-decomposable queries on $CH(S)$ mentioned above can be answered in $O(\log h)$ time each.\footnote{In the preliminary version of the paper at WADS 2023, the query time was $O(\log n)$. In this version, we improve the query time to $O(\log h)$.} Further, after each update, the convex hull $CH(S)$ can be output explicitly in $O(h)$ time. Specifically, the following theorem summarizes our result.

\begin{theorem}\label{theo:convexhull}
We can dynamically maintain the convex hull $CH(S)$ of a set $S$ of points in the plane to support each window-sliding update (i.e., either insert a point to the right of all points of $S$ or delete the leftmost point of $S$) in $O(1)$ amortized time, such that the following operations on $CH(S)$ can be performed. Let $n=|S|$ and $h$ be the number of vertices of $CH(S)$ right before each operation.
\begin{enumerate}
  \item The convex hull $CH(S)$ can be explicitly output in $O(h)$ time.
  \item
        Given two vertical lines, the vertices of $CH(S)$ between the vertical lines can be output in order along the boundary of $CH(S)$ in $O(k+\log h)$ time, where $k$ is the number of vertices of $CH(S)$ between the two vertical lines.
  \item Each of the following queries can be answered in $O(\log h)$ time.
      \begin{enumerate}
      \item\label{item:query}
      Given a query direction, find the most extreme point of $S$ along the direction.
      \item
      Given a query line, decide whether the line intersects $CH(S)$.
      \item
      Given a query point outside $CH(S)$, find the two tangents from the point to $CH(S)$.
      \item
      Given a query line, find its intersection with $CH(S)$, or equivalently, find the edges of $CH(S)$ intersecting the line.
      \item
      Given a query point, decide whether the point is in $CH(S)$.
      \item
      Given a convex polygon of $O(h)$ vertices (represented in any data structure that supports binary search), decide whether it intersects $CH(S)$, and if not, find their common tangents (both outer and inner).
      \end{enumerate}
\end{enumerate}
\end{theorem}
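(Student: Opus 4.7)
My plan is to maintain the upper hull $U$ and the lower hull $L$ of $S$ separately, noting that the window-sliding update model guarantees that $S$ is always sorted by $x$-coordinate. I intend to store each hull as an array-backed double-ended sequence of vertices that admits $O(1)$ amortized push/pop at each end and $O(1)$ random access by rank (a resizable deque suffices). Insertion of a new rightmost point $p$ then proceeds by a Graham-scan-style update: pop vertices from the right end of $U$ while the top two hull vertices together with $p$ fail to make a right turn, then push $p$; the symmetric operation on $L$ completes the insertion. A standard amortized argument shows that each point is pushed and popped from the right end $O(1)$ times in total, giving $O(1)$ amortized insertion time.

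The main difficulty is the deletion of the leftmost point $\ell$ of $S$, because $\ell$ may currently be hiding, beneath its tangent edge on the hull, points of $S$ that are destined to become hull vertices once $\ell$ is removed. My plan is to precompute, at the time each new point is inserted and causes a sequence of vertices to be popped from the right of $U$, a small auxiliary structure (a ``deletion witness'' attached to the new hull edge) describing which popped points would reappear on the hull once the current left endpoint of that edge is removed. These witnesses are built using only pointers onto points already in $S$, and their construction cost is paid for by the pops of the insertion that creates them. When a leftmost deletion is processed, we pop the front of the deque for $U$, and, guided by the witness attached to the incident edge, splice in any newly exposed vertices at the front; this is repeated only while the new front is itself invalidated. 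The core of the amortized analysis is to show, via a potential function that charges against both never-popped and never-trailed points, that each point of $S$ participates in $O(1)$ splices over its lifetime, yielding $O(1)$ amortized cost per deletion.

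Given an array-backed hull supporting random access, the query operations in part~3 all reduce to standard binary searches on the hull vertices, as is classical: the extreme-point-in-direction query, the line-hull intersection test, the edges intersecting a query line, the containment test, and the tangents from a query point each binary-search on the slopes of hull edges in $O(\log h)$ time; the tangents between two convex polygons reduce to the $O(\log h)$ routine of Overmars and van Leeuwen~\cite{ref:OvermarsMa81}. Outputting $CH(S)$ in $O(h)$ time is immediate from the deque representation by concatenating $U$ and $L$, and reporting the vertices of $CH(S)$ between two vertical lines in $O(k+\log h)$ time reduces to two $O(\log h)$ binary searches (one on $U$, one on $L$) for the $x$-range endpoints followed by a linear traversal.

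The step I expect to be the main obstacle is the correctness and the amortized analysis of the leftmost-deletion procedure, in particular showing that the witness structures can be designed so that (i) splicing exposes exactly the correct sequence of new hull vertices, (ii) the splices combine cleanly with subsequent right-side pops from future insertions, and (iii) the total work remains $O(1)$ amortized per update while the underlying container stays array-backed rather than a balanced search tree, which would lose a logarithmic factor. Getting this bookkeeping right while simultaneously preserving $O(\log h)$ random access for the queries will be the technical heart of the proof.
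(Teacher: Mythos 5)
Your insertion handling, query reductions, and output procedures are all fine and match standard practice, but the heart of the matter --- the leftmost deletion --- is exactly where your proposal has a genuine gap, and you correctly sense this yourself. The ``deletion witness'' idea as sketched does not work: a witness attached to a hull edge at the time its hidden points are popped records which of those points would reappear once the edge's left endpoint is deleted, but this set is not determined at pop time. A point $p$ hidden below an edge $\overline{vq}$ can later be re-hidden by a \emph{future} insertion $q'$ that creates a new blocking segment $\overline{v'q'}$ with $v'$ still alive after $v$ is deleted; concretely, with points $a,a',p,b$ in $x$-order where the hull is $a,b$ and $p$ lies above $\overline{a'b}$, the witness for $\overline{ab}$ says $a',p$ reappear when $a$ dies, but inserting a high point $c$ that pops $b$ and places $p$ below $\overline{a'c}$ invalidates it. Keeping witnesses correct therefore forces you to revisit hidden points on insertions, and without a further idea each point can be revisited many times, destroying the $O(1)$ amortized bound. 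Your items (i)--(iii) are not bookkeeping details; they are the entire difficulty, and the potential-function argument you allude to is not supplied.

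The paper resolves this with a decomposition your proposal lacks: a vertical line $\ell$ that moves only rightward splits $S$ into a left part $S_1$ that receives \emph{only} deletions (from its left end) and a right part $S_2$ that receives \emph{only} insertions (at its right end). Because $S_1$ never gains points, its entire future hull evolution is precomputable by a single right-to-left Graham scan whose intermediate states are recorded in a stack per vertex (the ``list-stack'' structure); deleting the leftmost point then just pops one stack, and the staleness problem above cannot arise. $S_2$ is handled by ordinary Graham scan, the two hulls interact only through their common tangent, and separate monotonicity lemmas show each point participates $O(1)$ times in tangent searches. When $S_1$ empties, $S_2$ is rescanned to become the new $S_1$, each point paying for this once. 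For the queries, the paper stores the hull vertices in a finger search tree updated only at its two extreme leaves; your array-backed deque with rank access would serve the same purpose if the deletion machinery were in place, but as it stands the deletion step is unproved and the approach as described would fail.
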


Comparing to all previous work, albeit on a very special sequence of updates, our result is particularly interesting due to the $O(1)$ amortized update time as well as its simplicity.


\paragraph{\bf Applications.}
Although the updates in our problem are quite special, the problem still finds applications. For example, Becker et al.~\cite{ref:BeckerAn91} considered the problem of finding two rectangles of minimum total area to enclose a set of $n$ rectangles in the plane. They gave an algorithm of $O(n\log n)$ time and $O(n\log\log n)$ space. Their algorithm has a subproblem of processing a dynamic set of points to answer queries of Type~\ref{item:query} of Theorem~\ref{theo:convexhull} with respect to window-sliding updates (see Section~3.2~\cite{ref:BeckerAn91}).
The subproblem is solved using subpath convex hull query data structure in~\cite{ref:GuibasCo91}, which costs $O(n\log \log n)$ space.
Using Theorem~\ref{theo:convexhull}, we can reduce the space of the algorithm to $O(n)$ while the runtime is still $O(n\log n)$. Note that Wang~\cite{ref:WangAl20} recently improved the space of the result of~\cite{ref:GuibasCo91} to $O(n)$, which also leads to an $O(n)$ space solution for the algorithm of~\cite{ref:BeckerAn91}. However, the approach of Wang~\cite{ref:WangAl20} is much more complicated.

Becker et al.~\cite{ref:BeckerEn96} extended their work above and studied the problem of enclosing a set of simple polygons using two rectangles of minimum total area. They gave an algorithm of $O(n\alpha(n)\log n)$ time and $O(n\log\log n)$ space, where $n$ is the total number of vertices of all polygons and $\alpha(n)$ is the inverse Ackermann function. The algorithm has a similar subproblem as above (see Section~4.2~\cite{ref:BeckerEn96}). Similarly, our result can reduce the space of their algorithm to $O(n)$ while the runtime is still $O(n\alpha(n)\log n)$.

We believe that our result may find other applications that remain
to be discovered.

\paragraph{\bf Outline.}
After introducing notation in Section~\ref{sec:pre}, we will prove
Theorem~\ref{theo:convexhull} gradually as follows. First, in
Section~\ref{sec:preconvexhull}, we give a data structure for a special problem
setting.
Then we
extend our techniques to the general problem setting in
Section~\ref{sec:generalconvexhull}. The data structures in
Section~\ref{sec:preconvexhull} and~\ref{sec:generalconvexhull} can only
perform the first operation in Theorem~\ref{theo:convexhull} (i.e., output $CH(S)$), we will
enhance the data structure in Section~\ref{sec:queries} so that other
operations can be handled. Section~\ref{sec:con} concludes with some
remarks.

\section{Preliminaries}
\label{sec:pre}

Let $\calU(S)$ denote the upper hull of $CH(S)$. We will focus on maintaining $\calU(S)$, and the lower hull can be treated likewise. The data structure for both hulls together will achieve Theorem~\ref{theo:convexhull}.

For any two points $p$ and $q$ in the plane, we say that $p$ is to the
{\em left} of $q$ if the $x$-coordinate of $p$ is smaller than or
equal to that of $q$. Similarly, we can define ``to the right of'',
``above'', and ``below''. We add ``strictly'' in front of them to
indicate that the tie case does not happen. For example, $p$ is
strictly below $q$ if the $y$-coordinate of $p$ is smaller than
that of $q$.

For a line segment $s$ and a point $p$, we say that $p$ is {\em vertically
below} $s$ if the vertical line through $p$ intersects $s$ at a point
above $p$ ($p\in s$ is possible). For any two line segments $s_1$ and $s_2$, we say that $s_1$ is vertically below $s_2$ if both endpoints of $s_1$ are vertically below $s_2$.

Suppose $\calL$ is a sequence of points and $p$ and $q$ are two points
of $\calL$. We will adhere to the convention that a subsequence of
$\calL$ {\em between $p$ and $q$} includes both $p$ and $q$, but a
subsequence of $\calL$ {\em strictly} between $p$ and $q$ does not include
either one.

For ease of exposition, we make a general position assumption that no two points of $S$ have the same $x$-coordinate and no three points are collinear.

\section{A special problem setting with a partition line}
\label{sec:preconvexhull}

In this section we consider a special problem setting.
Specifically, let $L=\{p_1,p_2,\ldots,p_n\}$ (resp.,
$R=\{q_1,q_2,\ldots,q_n\}$) be a set of $n$ points sorted by
increasing $x$-coordinate, such that all points of $L$ are strictly to the
left of a known vertical line $\ell$ and all points of $R$ are strictly to the right of $\ell$.
We want to
maintain the upper convex hull $\calU(S)$ of a consecutive subsequence $S$ of
$L\cup R=\{p_1,\ldots,p_n,q_1,\ldots,q_n\}$, i.e., $S=\{p_i,p_{i+1},\ldots,p_n,q_1,q_2,\ldots,q_j\}$, with $S=L$ initially, such that a deletion will
delete the leftmost point of $S$ and an insertion will insert the point
of $R$ right after the last point of $S$. Further, deletions only happen to points of $L$, i.e., once $p_n$ is deleted from $S$, no deletion will happen.
Therefore, there are a total of $n$ insertions and $n$ deletions.

Our result is a data structure that can
support each update in $O(1)$ amortized time, and after each update we
can output $\calU(S)$ in $O(|\calU(S)|)$ time. The data structure can
be built in $O(n)$ time on $S=L$ initially.
Note that the set $L$ is given offline because $S=L$ initially, but points of $R$ are given online.
We will extend the techniques to the general problem setting in Section~\ref{sec:generalconvexhull}, and the data structure will be enhanced in Section~\ref{sec:queries} so that other operations on $CH(S)$ can be handled.

\subsection{Initialization}

Initially, we construct the data structure on $L$, as follows. We run Graham's scan to process points of $L$ leftwards from $p_n$ to $p_1$. Each vertex $p_i\in L$
is associated with a stack $Q(p_i)$, which is empty initially.
Each vertex $p_i$ also has two pointers $l(p_i)$ and $r(p_i)$, pointing to its left and right neighbors respectively if $p_i$ is a vertex of the current upper hull.
Suppose we are processing a point $p_i$. Then, the upper hull of
$p_{i+1},p_{i+2},\ldots,p_n$ has already been maintained by a doubly
linked list with $p_{i+1}$ as the head.
To process $p_i$, we run Graham's scan to find a vertex $p_j$ of the current upper hull such that $\overline{p_ip_j}$ is an edge of the new upper hull. Then, we push $p_i$ into the stack $Q(p_j)$, and set $l(p_j)=p_i$ and $r(p_i)=p_j$.
The algorithm is done after $p_1$ is processed.

The stacks essentially maintain the left neighbors of the vertices of the historical upper hulls so that when some points are deleted in future, we can traverse leftwards from any vertex on the current upper hull after those deletions. More specifically, if $p_i$ is a vertex on the current upper hull, then the vertex at the top of $Q(p_i)$ is the left neighbor of $p_i$ on the upper hull.
In addition, notice that once the right neighbor pointer $r(p_i)$ is set during processing $p_i$, it will never be changed. Hence, in future if $p_i$ becomes a vertex of the current upper hull after some deletions, $r(p_i)$ is the right neighbor of $p_i$ on the current upper hull.
Therefore, we do not need another stack to keep the right neighbor of $p_i$.

The above builds our data structure for $\calU(S)$ initially when $S=L$. In what follows, we discuss the general situation when $S$ contains both points of $L$ and $R$. Let $S_1=S\cap L$ and $S_2=S\cap R$.
The data structure described above is used for maintaining $\calU(S_1)$. For $S_2$, we only use a doubly linked list to store its upper hull $\calU(S_2)$, and the stacks are not needed.
In addition, we explicitly maintain the common tangent $\overline{t_1t_2}$ of the two upper hulls $\calU(S_1)$ and $\calU(S_2)$, where $t_1$ and $t_2$ are the tangent points on $\calU(S_1)$ and $\calU(S_2)$, respectively. We also maintain the leftmost and rightmost points of $S$. This completes the description of our data structure for $S$.

Using the data structure we can output $\calU(S)$ in $O(|\calU(S)|)$ time as follows. Starting from the leftmost vertex of $S_1$, we follow the right neighbor pointers until we reach $t_1$, and then we output $\overline{t_1t_2}$. Finally, we traverse $\calU(S_2)$ from $t_2$ rightwards until the rightmost vertex. In the following, we discuss how to handle insertions and deletions.

\subsection{Insertions}

Suppose a point $q_j\in R$ is inserted into $S$. If $j=1$, then this is the first insertion. We set $t_2=q_1$ and find $t_1$ on $\calU(S_1)$ by traversing it leftwards from $p_n$ (i.e., by Graham's scan).
This takes $O(n)$ time but happens only once in the entire algorithm (for processing all $2n$ insertions and deletions), so the amortized cost for the insertion of $q_1$ is $O(1)$.
In the following we consider the general case $j>1$.

\begin{figure}[t]
\begin{minipage}[t]{\textwidth}
\begin{center}
\includegraphics[height=1.7in]{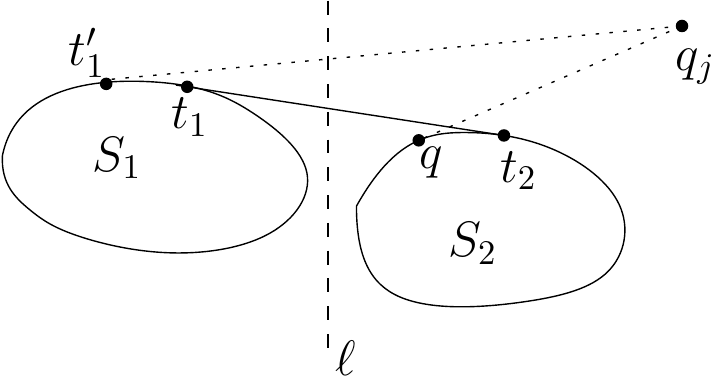}
\caption{\footnotesize Illustrating the insertion of $q_j$.}
\label{fig:convexhull-10}
\end{center}
\end{minipage}
\vspace{-0.15in}
\end{figure}

We first update $\calU(S_2)$ by Graham's scan. This procedure takes $O(n)$ time in total for all $n$ insertions, and thus $O(1)$ amortized time per insertion. Let $q$ be the vertex such that $\overline{qq_j}$ is the edge of the new hull $\calU(S_2)$ (e.g., see Fig.~\ref{fig:convexhull-10}). If $q$ is strictly to the right of $t_2$, or if $q=t_2$ and $\overline{t_1t_2}$ and $\overline{t_2q_j}$ make a right turn at $t_2$, then $\overline{t_1t_2}$ is still the common tangent and we are done with the insertion. Otherwise, we update $t_2=q_j$ and find the new $t_1$ by traversing $\calU(S_1)$ leftwards from the current $t_1$, and we call it the {\em insertion-type tangent searching procedure}, which takes $O(1+k)$ time, with $k$ equal to the number of vertices on $\calU(S_1)$ strictly between the original $t_1$ and the new $t_1$ (and we say that those vertices are {\em involved} in the procedure). The following lemma implies that the total time of this procedure in the entire algorithm is $O(n)$, and thus the amortized cost is $O(1)$.

\begin{lemma}\label{lem:insertioncost}
Each point of $L\cup R$ can be involved in the insertion-type
tangent searching procedure at most once in the entire algorithm.
\end{lemma}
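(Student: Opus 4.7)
The plan is to prove the following invariant, which immediately implies the lemma: once a point $p\in L$ has been involved in an insertion-type tangent searching procedure, $p$ stays at or to the right of the current $t_1$ on $\calU(S_1)$ at every subsequent moment, for as long as $p\in S$. Granting this, $p$ cannot be involved a second time: being involved requires the leftward walk from the current $t_1$ to strictly cross $p$, which is impossible if $p$ is not strictly to the left of $t_1$. Only points of $L$ need be considered, since the walk lives entirely within $\calU(S_1)\subseteq L$, so points of $R$ are never involved at all.

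Immediately after $p$ is first involved at some time $T$, $p$ is strictly between the old and the new $t_1$ on $\calU(S_1)$, so the invariant holds at time $T$. I would then verify that no subsequent operation can push $t_1$ strictly to the right past $p$ by analyzing the four ways $t_1$ may change: (a) an insertion that triggers the procedure walks $t_1$ only leftward; (b) an insertion that does not trigger the procedure leaves $\overline{t_1t_2}$ unchanged by the algorithm; (c) a deletion of a leftmost point that is strictly to the left of $t_1$ leaves $t_1$ unchanged, because the old tangent line $\overline{t_1t_2}$ lies strictly above every other point of $S_1$ (by general position) and hence remains a supporting line of the new upper hull touching at the same vertex $t_1$; and (d) a deletion that removes $t_1$ itself, possible only when $t_1$ is the leftmost point of $S$. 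Cases (a)--(c) preserve the invariant routinely.

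The main obstacle is case (d), where $t_1$ must be recomputed on $\calU(S_1\setminus\{t_1\})$ and could a priori jump anywhere. To rule out its landing strictly to the right of $p$, I would use the characterization that the tangent point from $t_2$ to $\calU(S_1)$ is the vertex $v$ of the upper hull minimizing $m_v := (y_{t_2}-y_v)/(x_{t_2}-x_v)$, and that traversing $\calU(S_1)$ from left to right the function $v\mapsto m_v$ is strictly unimodal with its unique minimum at $t_1$. Since $p$ lies strictly to the right of the old $t_1$, every vertex $v$ strictly to the right of $p$ on the old $\calU(S_1)$ satisfies $m_v>m_p$. Because deleting the leftmost vertex of the upper hull only alters the hull within the $x$-range strictly to the left of $p$ (the portion of $\calU(S_1)$ from the right neighbor of the old $t_1$ rightward is untouched), the vertices strictly to the right of $p$ on the new $\calU(S_1)$ are unchanged and still have $m_v>m_p$. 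Hence the new minimizer of $m_v$, i.e., the new $t_1$, lies at $p$ or strictly to its left, preserving the invariant in case (d) and completing the proof.
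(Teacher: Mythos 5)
Your overall strategy differs from the paper's: you maintain an invariant on the position of $t_1$ relative to an already-involved point $p$ and check it against every subsequent operation, whereas the paper argues in one step that every point of $L$ strictly to the right of $p$ is vertically below the segment $\overline{pq_j}$, so no such point can ever again be the left endpoint of the common tangent while $p$ and $q_j$ are both present. Your invariant is true, but your proof of the case you yourself call the main obstacle, case (d), has a genuine gap. When the leftmost point $t_1$ is deleted, the common tangent is recomputed on \emph{both} hulls: the new tangent is $\overline{t_1't_2'}$, and $t_2'$ may lie strictly to the left of the old $t_2$ --- this is precisely why the paper's deletion-type tangent searching procedure alternately advances a pointer on $\calU(S_1)$ \emph{and} a pointer on $\calU(S_2)$ (see Fig.~\ref{fig:convexhulldeletion}). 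Consequently the new $t_1'$ is the tangent point from $t_2'$, not from $t_2$, so it is \emph{not} ``the new minimizer of $m_v$'' for your $m_v$ defined with apex $t_2$, and the comparison $m_v>m_p$ does not by itself locate the new $t_1'$.

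The conclusion of case (d) is nonetheless correct and can be rescued by a different argument: after deleting the leftmost hull vertex $t_1$, the new bridge's left endpoint cannot lie strictly to the right of $r(t_1)$, the old right neighbor of $t_1$ on $\calU(S_1)$, because a line supporting $\calU(S_1\setminus\{t_1\})$ at a vertex strictly right of $r(t_1)$ and supporting $\calU(S_2)$ would also pass strictly above the deleted $t_1$, hence would already have been the old bridge, contradicting its uniqueness. Moreover $r(t_1)$ is at or to the left of $p$ because $p$, having been a vertex of $\calU(S_1)$ when it was involved and $S_1$ only ever shrinking thereafter, remains a vertex of $\calU(S_1)$ strictly to the right of $t_1$. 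You implicitly rely on this last fact as well (your claim that the hull is only altered strictly to the left of $p$ bounds the changed region by $p$ only if $r(t_1)$ is at or to the left of $p$), so it needs to be stated. As written, the argument for case (d) does not go through.
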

\begin{proof}
We use $t_1$ to refer to the original tangent point and use $t_1'$ to
refer to the new one  (e.g., see Fig.~\ref{fig:convexhull-10}). Let
$p$ be any point on $\calU(S_1)$ strictly between $t_1'$ and $t_1$. Then, we have the following {\em observation}:
all points of $L$ strictly to the right of $p$
must be {\em vertically below} the segment $\overline{pq_j}$.

Assume to the contrary that $p$ is involved again in the procedure
when another point $q_k$ with $k>j$ is inserted. Let $t_1''$ be the
tangent point on $\calU(S_1)$ right before $q_k$ is inserted. As $p$
is involved in the procedure, $p$ must be on the current
hull $\calU(S_1)$ and strictly to the left of $t_1''$.
According to the above observation, $t_1''$ is vertically below the segment $\overline{pq_j}$ (and $t_1''\not\in \overline{pq_j}$ due to the general position assumption).
Since $q_j$ is in the current set $S_2$ and $p$ is in the current set
$S_1$, and $t_1''$ is vertically below the segment $\overline{pq_j}$,
$t_1''$ cannot be the left tangent point of the common tangent of $\calU(S_1)$ and $\calU(S_2)$, incurring a
contradiction. \qed
\end{proof}


\subsection{Deletions}

Suppose a point $p_i\in L$ is deleted from $S_1$.
If $i=n$, then this is the last deletion. In this case, we only need
to maintain $\calU(S_2)$ for insertions only in future, which can be done by
Graham's scan. In the following, we assume that $i<n$.

Note that $p_i$ must be the leftmost vertex of the current hull
$\calU(S_1)$. Let $p=r(p_i)$ (i.e., $p$ is the right neighbor of $p_i$ on $\calU(S_1)$).
According to our data structure, $p_i$ is at the top of the stack $Q(p)$.
We pop $p_i$ out of $Q(p)$. If $p_i\neq t_1$, then $p_i$ is strictly to the left of $t_1$ and
$\overline{t_1t_2}$ is still the common tangent of the new $S_1$ and $S_2$, and thus we are done with the
deletion. Otherwise, we find the new tangent by simultaneously
traversing on $\calU(S_1)$ leftwards from $p$ and traversing on
$\calU(S_2)$ leftwards from $t_2$ (e.g., see
Fig.~\ref{fig:convexhulldeletion}). Specifically, we first check
whether $\overline{pt_2}$ is tangent to $\calU(S_1)$ at $p$. If not,
then we move $p$ leftwards on the new $\calU(S_1)$ until
$\overline{pt_2}$ is tangent to $\calU(S_1)$ at $p$. Then, we check
whether $\overline{pt_2}$ is tangent to $\calU(S_2)$ at $t_2$. If not,
then we move $t_2$ leftwards on $\calU(S_2)$ until $\overline{pt_2}$
is tangent to $\calU(S_2)$ at $t_2$. If the new $\overline{pt_2}$ is
not tangent to $\calU(S_1)$ at $p$, then we move $p$ leftwards again.
We repeat the procedure until the updated $\overline{pt_2}$ is tangent
to $\calU(S_1)$ at $p$ and also tangent to $\calU(S_2)$ at $t_2$. Note
that both $p$ and $t_2$ are monotonically moving leftwards on
$\calU(S_1)$ and $\calU(S_2)$, respectively. Note also that traversing
leftwards on $\calU(S_1)$ is achieved by using the stacks associated with the vertices while traversing on $\calU(S_2)$ is done by using the doubly-linked list that stores $\calU(S_2)$.
We call the above the {\em deletion-type tangent searching procedure},
which takes $O(1+k_1+k_2)$ time, where $k_1$ is the number of points
on $\calU(S_1)$ strictly between $p$ and the new tangent point $t_1$, i.e., the final position of $p$ after the algorithm finishes (we say that these points are involved in the procedure), and $k_2$ is
the number of points on $\calU(S_2)$ strictly between the original
$t_2$ and the new $t_2$ (we say that these points are involved in the
procedure). The following lemma implies that the total time of this
procedure in the entire algorithm is $O(n)$, and thus the amortized
cost is $O(1)$.

\begin{figure}[t]
\begin{minipage}[t]{\textwidth}
\begin{center}
\includegraphics[height=1.7in]{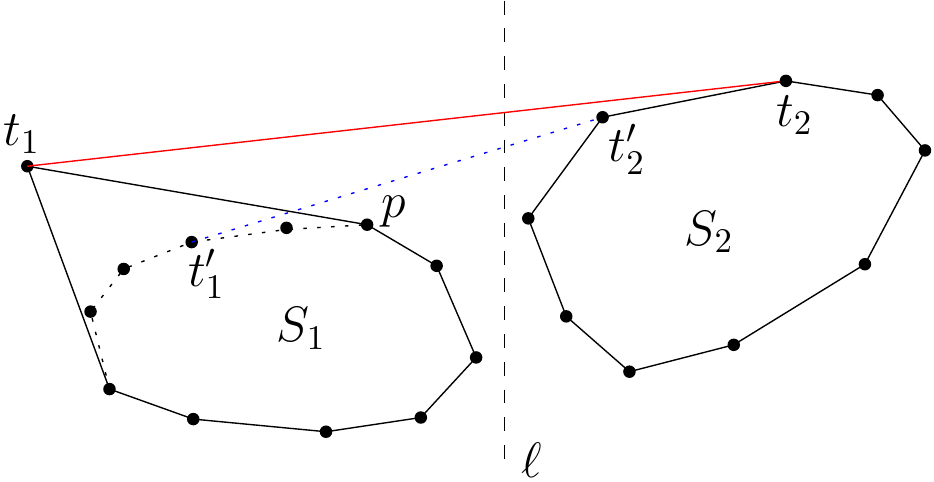}
\caption{\footnotesize Illustrating the deletion of $p_i$ where $p_i=t_1$. $\overline{t_1't_2'}$ are the new tangent of $\calU(S_1)$ and $\calU(S_2)$ after $p_i$ is deleted.}
\label{fig:convexhulldeletion}
\end{center}
\end{minipage}
\vspace{-0.15in}
\end{figure}

\begin{lemma}\label{lem:costconvexhull}
Each point of $L\cup R$ can be involved in the deletion-type
tangent searching procedure at most once in the entire algorithm.
\end{lemma}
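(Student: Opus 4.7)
The plan is to split the analysis into two cases depending on whether the involved point $p$ lies in $R$ (hence a vertex of $\calU(S_2)$) or in $L$ (hence a vertex of $\calU(S_1)$), mirroring the two simultaneous traversals carried out by the deletion-type procedure.

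For $p\in R$, suppose $p$ is involved at some time $T_1$. At the end of that procedure, $t_2$ lies strictly to the left of $p$ on $\calU(S_2)$, while $p$ itself is still a vertex of $\calU(S_2)$. For $p$ to be involved again at a later time $T_2$, $t_2$ must first move back strictly to the right of $p$. Since deletion-type procedures only move $t_2$ leftwards, this transition must occur during an insertion that updates $t_2$; the insertion procedure sets such an updated $t_2$ equal to the newly inserted point $q_j$. I would then examine the two subcases in which an insertion actually enters the update branch---either $q_j$'s predecessor $q'$ on the new $\calU(S_2)$ lies strictly to the left of the current $t_2$, or $q'=t_2$ with a left turn at $t_2$---and show that in both subcases Graham's scan pops \emph{every} vertex of the old $\calU(S_2)$ that lies strictly to the right of the current $t_2$. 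Since $p$ was strictly to the right of $t_2$ just before the insertion, $p$ is popped and never returns to $\calU(S_2)$, contradicting the assumption that $p$ is still a vertex of $\calU(S_2)$ at $T_2$.

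For $p\in L$, I would start from the observation that any vertex strictly between $r(p_i)$ and the new $t_1$ on the \emph{new} $\calU(S_1)$ has $x$-coordinate strictly less than that of $r(p_i)$. On the old $\calU(S_1)$, however, $p_i$ is the unique vertex with $x$-coordinate smaller than that of $r(p_i)$, since $p_i$ is its leftmost vertex. Hence any involved $p\ne p_i$ was \emph{not} on the old $\calU(S_1)$ and has only just been revealed by the deletion of $p_i$. I would then establish the invariant that once $p$ joins $\calU(S_1)$, it stays on $\calU(S_1)$ until $p$ itself is deleted as the leftmost point of $S_1$: insertions do not modify $\calU(S_1)$, and deleting the leftmost point of $S_1$ cannot remove any non-leftmost vertex from the upper hull (removing a point only shrinks the set of chords that could hide another vertex). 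Applied at a hypothetical second involvement $T_2$ with deleted point $p_{i'}\ne p$, the invariant says $p$ is still on $\calU(S_1)$ immediately before $T_2$'s deletion, while the observation applied to $T_2$ says $p$ is \emph{not}, a contradiction.

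The step I expect to be the main obstacle is the $R$-side popping claim: I must verify, in each of the two insertion update subcases, that Graham's scan genuinely pops \emph{every} vertex strictly to the right of the old $t_2$, by carefully tracking how the new edge $\overline{q' q_j}$ is formed on the right end of $\calU(S_2)$. Once this popping fact and the $L$-side invariant ``revealed vertices only leave $\calU(S_1)$ by being deleted'' are in hand, both cases close cleanly via the $x$-coordinate comparison.
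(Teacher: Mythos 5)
Your proposal is correct and follows essentially the same two-part strategy as the paper: the $L$-side points are exactly those newly revealed by deleting the leftmost vertex $p_i$ (they were hidden below $\overline{p_i\,r(p_i)}$), and once revealed they remain on $\calU(S_1)$ until deleted, so they cannot be revealed twice; the $R$-side points are handled by observing that $t_2$ only moves rightward via an insertion whose Graham's scan pops everything strictly right of the old $t_2$, so a point passed over once is either permanently off $\calU(S_2)$ or stays to the right of any future traversal's starting tangent. Your version merely reorganizes the $R$-side case analysis (tracking when $t_2$ first crosses $p$ rather than comparing $t_2''$ with $t_2'$) and makes explicit the popping claim that the paper dismisses with ``one can verify.''
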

\begin{proof}
Let $p'$ be a vertex on $\calU(S_1)$ strictly between $p$ and the new
tangent point $t_1$ (which is $t_1'$ in Fig.~\ref{fig:convexhulldeletion}). We argue that $p'$ cannot be involved in the same
procedure again in future. Indeed, $p'$ was not a vertex of $\calU(S_1)$
before $p_i$ is deleted because it was vertically below
the edge $\overline{p_ip}$ of $\calU(S_1)$. After $p_i$ is deleted, since $p'$ is involved in
the procedure, $p'$ must be  a vertex of $\calU(S_1)$. Further,
$p'$ will always be a vertex of $\calU(S_1)$ until it is deleted. Hence, $p'$
will never be involved in the procedure again in future (because to
do so $p'$ cannot be a vertex of $\calU(S_1)$ right before the deletion).

Let $t_2'$ be the new $t_2$ and $t_2$ be the original one (e.g., see
Fig.~\ref{fig:convexhulldeletion}). Let $q$ be
a vertex on $\calU(S_2)$ strictly between the two. We argue that $q$
will never be involved in the same procedure again in future.
Consider another deletion in future. Let $t_2''$ be the tangent point on
$\calU(S_2)$ right before the deletion. If $t_2''$ is strictly to the right of
$t_2'$, then one can verify that $q$ must have been removed from $\calU(S_2)$ due to the insertion of $t_2''$, and further, $q$ will never be a vertex of $\calU(S_2)$ again because only insertions will happen to
$S_2$. Hence, in this case $q$ cannot be involved in the procedure
(because to do so $q$ must be a vertex of $\calU(S_2)$).
If $t_2''$ is strictly to the left of $t_2'$ or at $t_2'$ (and thus is strictly
to the left of $q$), then the procedure
will search the new right tangent point by traversing on $\calU(S_2)$
leftwards from $t_2''$, and because $q$ is strictly to the right of $t_2''$, $q$
cannot be involved in the procedure.
\qed
\end{proof}


As a summary, in the special problem setting, we can perform each insertion and deletion in $O(1)$ amortized time, and after each update, the upper hull $\calU(S)$ can be output in $|\calU(S)|$ time.


\section{The general problem setting}
\label{sec:generalconvexhull}

In this section, we extend our algorithm given in Section~\ref{sec:preconvexhull} to the general problem setting without the restriction on the existence of the partition line $\ell$. Specifically, we want to maintain the upper hull $\calU(S)$ under window-sliding updates, with $S=\emptyset$ initially.
We will show that each update can be handled in $O(1)$ amortized time and after each update $\calU(S)$ can be output in $O(|\calU(S)|)$ time.
We will enhance the data structure in Section~\ref{sec:queries} so that it can handle other operations on $CH(S)$.

During the course of processing updates, we maintain a vertical line $\ell$ that will move rightwards. At any moment, $\ell$ plays the same role as in the problem setting in Section~\ref{sec:preconvexhull}. In addition, $\ell$ always contains a point of $S$. Let $S_1$ be the subset of $S$ to the left of $\ell$ (including the point on $\ell$), and $S_2=S\setminus S_1$.
For $S_1$, we use the same data structure as before to maintain $\calU(S_1)$, i.e., a doubly linked list for vertices of $\calU(S_1)$ and a stack associated with each point of $S_1$, and we call it {\em the list-stack data structure}. For $S_2$, as before, we only use a doubly linked list to store the vertices of $\calU(S_2)$. Note that $S_2=\emptyset$ is possible. If $S_2\neq \emptyset$, we also maintain the common tangent $\overline{t_1t_2}$ of $\calU(S_1)$ and $\calU(S_2)$, with $t_1\in \calU(S_1)$ and $t_2\in \calU(S_2)$. We can output the upper hull $\calU(S)$ in $O(|\calU(S)|)$ time as before.

For each $i\geq 1$, let $p_i$ denote the $i$-th inserted point. Let $U$ denote the universal set of all points $p_i$ that will be inserted. Note that points of $U$ are given online and we only use $U$ for the reference purpose in our discussion (our algorithm has no information about $U$ beforehand).
We assume that $S$ initially consists of two points $p_1$ and $p_2$. We let $\ell$ pass through $p_1$. According to the above definitions, we have $S_1=\{p_1\}$, $S_2=\{p_2\}$, $t_1=p_1$, and $t_2=p_2$.
We assume that during the course of processing updates $S$ always has at least two points, since otherwise we could restart the algorithm from this initial stage. Next, we discuss how to process updates.

\subsection{Deletions}
\label{subsec:deletegeneral}

Suppose a point $p_i$ is deleted. If $p_i$ is not the only point of $S_1$, then we do the same processing as before in Section~\ref{sec:preconvexhull}. We briefly discuss it here. If $p_i\neq t_1$, then we pop $p_i$ out of the stack $Q(p)$ of $p$, where $p=r(p_i)$. In this case, we do not need to update $\overline{t_1t_2}$. Otherwise, we also need to update $\overline{t_1t_2}$, by the deletion-type tangent searching procedure as before. Lemma~\ref{lem:costconvexhull} is still applicable here (replacing $L\cup R$ with $U$), so the procedure takes $O(1)$ amortized time.

If $p_i$ is the only point in $S_1$, then we do the following. We move $\ell$ to the rightmost point of $S_2$, and thus, the new set $S_1$ consists of all points in the old set $S_2$ while the new $S_2$ becomes $\emptyset$.
Next, we build the list-stack data structure for $S_1$ by running Graham's scan leftwards from its rightmost point, which takes $|S_1|$ time. We call it the {\em left-hull construction procedure}. The following lemma implies that the amortized cost of the procedure is $O(1)$.

\begin{lemma}\label{lem:lefthullcon}
Every point of $U$ is involved in the left-hull construction procedure at most once in the entire algorithm.
\end{lemma}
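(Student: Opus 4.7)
The plan is to show that each point $p\in U$ can lie in $S_2$ during at most one contiguous time interval, so that $p$ can belong to $S_2$ just before at most one invocation of the left-hull construction procedure. Since the points involved in an invocation are exactly those of the old $S_2$ at the moment of invocation (these are the points Graham's scan processes to build the new list-stack data structure on the new $S_1$), this will immediately imply the lemma.

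The key step is to track how points migrate between $S_1$ and $S_2$. I would establish three basic facts read off from the algorithm description: (i) every insertion adds the new point to $S_2$, because the inserted point lies to the right of all of $S$ and hence strictly to the right of $\ell$; (ii) every deletion removes a point from $S_1$, since $\ell$ always contains a point of $S_1$ and so the leftmost point of $S$ always lies in $S_1$; and (iii) between consecutive invocations of the procedure, $\ell$ does not move at all, so the partition of $S$ into $(S_1,S_2)$ is rearranged only during an invocation itself, and each invocation rearranges it in a one-way fashion: the entire old $S_2$ becomes part of the new $S_1$, the new $S_2$ is reset to $\emptyset$, and no point is ever moved from $S_1$ back to $S_2$.

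Combining these three facts, once a point $p$ enters $S_2$ via its insertion, it stays in $S_2$ until the first invocation of the procedure after its insertion, at which moment it transfers to $S_1$; afterwards $p$ remains in $S_1$ until it is eventually deleted (possibly while $p$ is the lone survivor triggering a future invocation, but even then $p$ is deleted, not moved back into $S_2$). Hence $p$'s $S_2$-phase is a single contiguous time interval, and the only invocation whose triggering moment can find $p$ in $S_2$ is the one that ends this interval. The bulk of the reasoning is careful bookkeeping; the main conceptual ingredient is the one-way flow in fact (iii), which is what forecloses the possibility of a point being ``recycled'' through $S_2$ and thus involved in two separate invocations of the left-hull construction procedure.
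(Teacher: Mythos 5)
Your proposal is correct and rests on the same observation as the paper's proof: the flow of points from $S_2$ into $S_1$ is one-way, so a point that is involved in an invocation joins the new $S_1$ and must be deleted before the procedure can run again. The paper phrases this more tersely (the next invocation cannot occur until every point of the current $S_1$ has been deleted), but the underlying argument is the same.
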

\begin{proof}
Consider a point $p$ involved in the procedure. Notice that the procedure will not happen again before all points of the current set $S_1$ are deleted. Since $p$ is in the current set $S_1$, $p$ will not be involved in the same procedure again in future. \qed
\end{proof}


\subsection{Insertions}
\label{subsec:insertiongeneral}
Suppose a point $p_j$ is inserted. We first update $\calU(S_2)$ using Graham's scan, and we call it the {\em right-hull updating procedure}. After that, $p_j$ becomes the rightmost vertex of the new $\calU(S_2)$. The procedure takes $O(1+k)$ time, where $k$ is the number of vertices got removed from the old $\calU(S_2)$ (we say that these points are {\em involved} in the procedure). By the standard Graham's scan, the amortized cost of the procedure is $O(1)$. Note that although the line $\ell$ may move rightwards, we can still use the same analysis as the standard Graham's scan. Indeed, according to our algorithm for processing deletions discussed above, once $\ell$ moves rightwards, all points in $S_2$ will be in the new set $S_1$ and thus will never be involved in the right-hull updating procedure again in future.

After $\calU(S_2)$ is updated as above, we check whether the upper tangent $\overline{t_1t_2}$ needs to be updated, and if yes (in particular, if $S_2=\emptyset$ before the insertion), we run an insertion-type tangent searching procedure to find the new tangent in the same way as before in Section~\ref{sec:preconvexhull}. Lemma~\ref{lem:insertioncost} still applies (replacing $L\cup R$ with $U$), and thus the procedure takes $O(1)$ amortized time. This finishes the processing of the insertion, whose amortized cost is $O(1)$.

As a summary, in the general problem setting, we can perform each insertion and deletion in $O(1)$ amortized time, and after each update, the upper hull $\calU(S)$ can be output in $|\calU(S)|$ time.

\section{Convex hull queries}
\label{sec:queries}
In this section, we enhance the data structure described in Section~\ref{sec:generalconvexhull} so that it can support $O(\log h)$ time convex hull queries as stated in Theorem~\ref{theo:convexhull}, where $h$ is the number of vertices of the convex hull $CH(S)$.

The main idea is to use a red-black tree $T$ (or other types of finger search trees~\cite{ref:BrodalFi04}) to store the vertices of $\calU(S)$ in the left-to-right order with two ``fingers'' (i.e., two pointers) at the leftmost and rightmost leaves of $T$, respectively.\footnote{In the preliminary version of the paper at WADS 2023, we used interval trees and achieved $O(\log n)$ query time. In this version, following Michael T. Goodrich's suggestion, we resort to finger search trees instead, which help reduce the query time to $O(\log h)$ and also simplifies the overall algorithm.} During the course of the algorithm, $T$ will be updated with insertions and deletions. We will show that each insertion/deletion must happen at either the leftmost or the rightmost leaf, which takes $O(1)$ amortized time with the help of the two fingers~\cite{ref:TarjanAn88}. Using $T$, we can easily answer binary-search-based queries on $CH(S)$ in $O(\log h)$ time each.

Unless otherwise stated, we follow the same notation as those in
Section~\ref{sec:generalconvexhull}. In addition to the data structure for storing $S_1$ and $S_2$ described in Section~\ref{sec:generalconvexhull}, we assume that $T$ stores the vertices of $\calU(S)$ in the left-to-right order. In what follows, we discuss how to update $T$ during the algorithm in Section~\ref{sec:generalconvexhull}.

\subsection{Deletions}

Consider the deletion of a point $p_i$. As before, depending on whether $p_i$ is the only point of $S_1$, there are two cases.

\paragraph{\bf $\boldsymbol{p_i}$ is the only point of $\boldsymbol{S_1}$.}
If $p_i$ is the only point in $S_1$, then according to our algorithm,
we need to perform the left-hull construction procedure on the points $\{p_{i+1},p_{i+2},\ldots,p_{j}\}$, where $p_j$ is the rightmost point of $S_2$, after which the above set of points becomes the new $S_1$. In addition to the algorithm described in Section~\ref{subsec:deletegeneral}, we update $T$ as follows.

Recall that the left-hull construction procedure process vertices of $\{p_{i+1},p_{i+2},\ldots,p_{j}\}$ from right to left.
Suppose a vertex $p_k$ is being processed (i.e., points $p_{k+1},\ldots,p_j$ have already been processed and $\calU(S)$ is thus the upper hull of these points; $T$ store the vertices of $\calU(S)$). Then, using Graham's scan, we check whether the leftmost vertex $p_g$ of $\calU(S)$ needs to be removed due to $p_k$. If yes, we delete $p_g$ from $\calU(S)$; note that $p_g$ must be at the leftmost leaf of $T$. We continue this process until the leftmost vertex $p_g$ of the current $\calU(S)$ should not be removed due to $p_i$. Then,
we insert $p_k$ to $T$ as the leftmost leaf. Since each insertion and deletion of $T$ only happens at its leftmost leaf and we already have a finger there, each such update of $T$ takes $O(1)$ amortized time. Further, due to Lemma~\ref{lem:lefthullcon}, the amortized cost of the left-hull construction is still $O(1)$.

\paragraph{\bf $\boldsymbol{p_i}$ is not the only point of $\boldsymbol{S_1}$.}
Suppose $p_i$ is not the only point in $S_1$. Then we preform the same processing as before and update $T$ accordingly. There are two subcases depending on whether $p_i=t_1$.

If $p_i\neq t_1$, recall that our algorithm pops $p_i$ out of the stack $Q(p)$, where $p=r(p_i)$. In this case, the common tangent $\overline{t_1t_2}$ does not change. We update $T$ as follows. We first delete $p_i$ from $T$.  Observe that $p_i$ must be at the leftmost leaf of $T$. Therefore this deletion costs $O(1)$ amortized time. Let $\calU(S)$ refers to the upper hull after $p_i$ is deleted. Let $S'$ denote the set of vertices of $\calU(S)$ that were not on $\calU(S)$ before $p_i$ is deleted. It is not difficult to see that points of $S'$ are exactly the vertices of $\calU(S)$ strictly to the left of $p$. Starting from $p$, these vertices can be accesses from right to left using the list-stack data structure. We insert these points to $T$ in the right-to-left order. Observe that each newly inserted point becomes the leftmost leaf of the new tree (note that before these insertions, $p$ was at the leftmost leaf of $T$). Therefore, each such insertion on $T$ takes $O(1)$ amortized time. We refer to this process as {\em deletion-type tree update procedure}; we say that points of $S'$ are {\em involved} in this procedure. The following lemma implies that the amortized time of this procedure is $O(1)$ and so is the amortized time of deleting $p_i$.

\begin{lemma}\label{lem:deltreeupdate}
Each point can be involved in the deletion-type tree update procedure at most once in the entire algorithm.
\end{lemma}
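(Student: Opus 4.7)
The plan is to establish that once a point $p'$ becomes a vertex of $\calU(S_1)$ through an instance of this procedure, it remains a vertex of $\calU(S_1)$ for as long as $p'$ stays in $S$; this persistence will preclude $p'$ from being involved a second time. I would first unpack what $p' \in S'$ says structurally: by the characterization quoted in the text, $p'$ lies strictly to the left of $p = r(p_i)$ on $\calU(S)$ after the deletion of $p_i$. Because $p_i$ and $p$ were consecutive on the old $\calU(S_1)$, the old hull had no vertex strictly between them, so $p'$ was not on the old $\calU(S_1)$ (it was vertically below the edge $\overline{p_i p}$) and becomes a vertex of the new $\calU(S_1)$ precisely because that edge disappears when $p_i$ is removed.

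Next, I would prove the persistence claim. The set $S_1$ evolves in only two ways. In an ordinary deletion, the leftmost point of $S_1$ is removed; if this leftmost point is not $p'$, then $p'$ lies strictly to its right on $\calU(S_1)$, and removing the leftmost vertex cannot dislodge any other old vertex — intuitively, the upper envelope of a set can only move down when a point is removed, so every vertex of the old hull that is not itself removed must remain on the new upper envelope. In the $\ell$-move case, $S_1$ is emptied when its last remaining point is deleted; since $\ell$ never moves leftward and insertions feed only $S_2$, the point $p'$ is either already deleted before this event or is itself the single point being deleted, and in neither case does it survive.

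Finally, I would finish by contradiction. Suppose $p'$ is involved in a later instance of the procedure, triggered by the deletion of some $p_j$ (with $p_j \neq t_1$ at that moment). Since membership of $p'$ in the later $S'$ requires $p'$ to be a vertex of the new $\calU(S)$, in particular $p' \in S$ immediately before this later deletion, and so by persistence $p'$ is already a vertex of $\calU(S_1)$. Because $p_j$ is the leftmost vertex of $\calU(S_1)$, either $p' = p_j$ (impossible, as $p_j$ is being deleted and so cannot lie on the new $\calU(S)$) or $p'$ lies at or to the right of $r(p_j)$ on $\calU(S_1)$. But the characterization of the later $S'$ demands $p'$ be strictly to the left of $r(p_j)$, a contradiction. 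The main subtle point, I expect, is the persistence claim — in particular verifying that the $\ell$-move transition cannot re-admit $p'$ into $\calU(S_1)$ via a path through $S_2$, which is ruled out by the monotone rightward motion of $\ell$.
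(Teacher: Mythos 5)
Your proof is correct and follows essentially the same route as the paper's: the key point in both is that $p'$ becomes a vertex of $\calU(S_1)$ after the deletion and, since $S_1$ only ever shrinks, remains one until it is deleted, whereas involvement in the procedure requires not having been a hull vertex beforehand. You simply spell out the persistence claim and the final contradiction in more detail than the paper does.
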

\begin{proof}
Let $p'$ be any point of $S'$. We argue that $p'$ cannot be involved in the deletion-type tree update procedure again in future. Indeed, $p'$ was involved in the procedure due to the deletion of $p_i$. Since $p_i\neq t_1$, $p'$ must be a point in $S_1$. Right before $p_i$ is deleted, $p'$ was not on the upper hull of $S_1$. After $p_i$ is deleted, since no points will be inserted into $S_1$ (before $S_1$ becomes empty), $p'$ will always be on the upper hull of $S_1$ until it is deleted. This implies that $p'$ cannot be involved in the procedure again in future. \qed
\end{proof}

If $p_i= t_1$ (see Fig.~\ref{fig:convexhulldeletion}), recall that our algorithm finds the new tangent $\overline{t_1't_2'}$ using the deletion-type tangent searching procedure as described before. We update $T$ accordingly as follows. Observe that $p_i$, which is $t_1$, is the leftmost vertex of $\calU(S)$ and thus is at the leftmost leaf of $T$. We delete $p_i$ from $T$, which takes $O(1)$ amortized time. Then, we insert the points of $\calU(S_2)$ from $t_2$ to $t_2'$ in this order so that each newly inserted point becomes the leftmost leaf of $T$ (and thus each insertion takes $O(1)$ amortized time). Note that the above points can be accessed one by one using their left neighbor pointers, starting from $t_2$. Next, we insert $t_1'$ to $T$ at the leftmost leaf and then insert the vertices of $\calU(S_1)$ from $t_1'$ to its leftmost vertex (which is $p_{i+1}$) in this order; these points can be accessed one by one using the list-stack data structure, starting from $t_1'$. Again,  each newly inserted point becomes the leftmost leaf of $T$ and thus each such insertion takes $O(1)$ amortized time. Observe that each of the above newly inserted point $p_k$ of $T$ belongs to one of the following three cases: (1) $p_k$ is is a vertex of $\calU(S_1)$ strictly between $t_1'$ and $p_{i+1}$; (2) $p_k$ is is a vertex of $\calU(S_2)$ strictly between $t_2'$ and $t_2$; (3) $p_k$ is either $t_1'$ or $t_2'$. For Case (1), by the same analysis as in Lemma~\ref{lem:deltreeupdate}, $p_k$ will not be involved in this process again in future. For Case $(2)$, since $p_k$ is involved in the deletion-type tangent searching procedure (for computing $\overline{t_1't_2'}$), by Lemma~\ref{lem:costconvexhull}, $p_k$  will not be involved in this process again in future. Therefore, the amortized time of this process is $O(1)$, so is the amortized time of deleting $p_i$.

\subsection{Insertions}

Suppose we insert a point $q_j$. In addition to our processing as described in Section~\ref{sec:generalconvexhull}, we update $T$ as follows. Recall that our algorithm first runs a right-hull updating procedure on $\calU(S_2)$ by Graham's scan. During the procedure, if we need to remove a point $q_k$ from $\calU(S_2)$ and $q_k$ is to the right of $t_2$ (including the case $q_k=t_2$), then $q_k$ must be the rightmost vertex of the current $\calU(S)$ and thus is the rightmost leaf of $T$. In this case, we delete $q_k$ from $T$. If $q_k=t_2$, then the rightmost leaf of $T$ becomes $t_1$. Recall that in this case our algorithm will perform an insertion-type tangent searching procedure to find the new tangent point $t_1'$ on $\calU(S_1)$ (see Fig.~\ref{fig:convexhull-10}). During the tangent searching procedure, we keep deleting the rightmost leaf of $T$ until $t_1'$ is found, after which we insert $q_j$ to $T$ as the rightmost leaf. As such, each deleted point $q_k$ of $T$ as above belongs to one of the following three cases: (1) $q_k$ is a vertex of the original $\calU(S_2)$ before $q_j$ is inserted; (2) $q_k$ is a vertex of $\calU(S_1)$ strictly between $t'_1$ and $t_1$; (3) $q_k$ is either $t_2$ or $t_1$. For Case (1), as discussed in Section~\ref{subsec:insertiongeneral}, $q_k$ will not be involved in this procedure again in future. For Case (2), since $q_k$ is involved in the insertion-type tangent searching procedure, by Lemma~\ref{lem:insertioncost}, $q_k$ will not be involved in this procedure again in future. Since each of the above update operations on $T$ happens at either the leftmost or the rightmost leaf of $T$, the amortized cost of handing the insertion of $q_j$ is $O(1)$.

\section{Concluding remarks}
\label{sec:con}

We proposed a data structure to dynamically maintain the convex hull
of a set of points in the plane under window-sliding updates. Although
the updates are quite special, our result is particularly interesting
because each update can be handled in constant amortized time and
binary-search-based queries (both decomposable and non-decomposable)
on the convex hull can be answered in logarithmic time each. In
addition, the convex hull itself can be retrieved in time linear in its
size.
Also interesting is that our method is quite simple.

In the dual setting, the problem is equivalent to dynamically
maintaining the upper and lower envelopes of a set $S^*$ of lines in
the plane under insertions and deletions. The window-sliding
updates become inserting into $S^*$ a line whose slope is
larger than that of any line in $S^*$ and deleting from $S^*$ the line
with the smallest slope. Alternatively, the lower envelope
of $S^*$ may be considered as a parametric heap~\cite{ref:KaplanFa01}.
A special but commonly used type of parametric heaps is kinetic
heaps~\cite{ref:BaschDa99,ref:KaplanFa01}. With our result,
we can handle each window-sliding insertion and deletion on
kinetic heaps in $O(1)$ amortized time, and the ``find-min'' operation
can also be performed in $O(1)$ amortized time, in the same way as
those in~\cite{ref:BrodalDy02,ref:KaplanFa01}.

\section*{Acknowledgment}
The author would like to thank Joseph S.B. Mitchell for posing the question at WADS 2023 about whether the $O(\log n)$ query time in the preliminary version of the paper can be improved to $O(\log h)$, and thank Michael T. Goodrich for suggesting the idea of using finger search trees to achieve this.

%

\bibliographystyle{plain}
\bibliography{reference}

\appendix



\end{document}